\title[Nonlinear codes]{Nonlinear codes with low redundancy}
\author{Shu Liu}\address{National Key Laboratory on Wireless Communications, University of Electronic Science and Technology of China, Chengdu, China} \email{shuliu@uestc.edu.cn}
\author{Chaoping Xing} \address{School of Electronic Information and Electric Engineering, Shanghai Jiao Tong University, Shanghai, China} \email{xingcp@sjtu.edu.cn}
\date{}
\newtheorem{lemma}{Lemma}[section]
\newtheorem{theorem}[lemma]{Theorem}
\newtheorem{cor}[lemma]{Corollary}
\newtheorem{ex}[lemma]{Example}
\theoremstyle{remark}
\newtheorem{rmk}{Remark}
\renewcommand{\epsilon}{\varepsilon}
\renewcommand{\le}{\leqslant}
\renewcommand{\ge}{\geqslant}
\newcommand{\vnote}[1]{}
\def\ZZ{\mathbb{Z}}
\def\F{\mathbb{F}}
\def \mC {\mathcal{C}}
\def \mA {\mathcal{A}}
\def \mA {\mathcal{A}}
\def \mB {\mathcal{B}}
\def \mC {\mathcal{C}}
\def \Xi {{X^{[i]}}}
\newcommand{\Ga}{\alpha}
\def \bGa{{\boldsymbol\Ga}}
\def \bb {{\bf b}}
\def \bc {{\bf c}}
\def \bx {{\bf x}}
\def \by {{\bf y}}
\def \bs {{\bf s}}
\def \bz {{\bf z}}
\def \bu {{\bf u}}
\def \bv {{\bf v}}
\def\bh{{\bf h}}
\numberwithin{equation}{subsection}
\def\GRS{{\rm GRS}}
\begin{document}
\maketitle

\begin{abstract}
Determining the largest size, or equivalently finding the lowest redundancy, of $q$-ary codes for given length  and minimum distance  is one of the central and fundamental problems in coding theory.  Inspired by the construction of Varshamov-Tenengolts (VT for short) codes via check-sums, we provide an explicit construction of nonlinear codes with lower redundancy than linear codes under the same length  and minimum distance. Similar to the VT codes, our construction works well for small distance (or even constant distance). Furthermore, we design quasi-linear time decoding algorithms for both erasure and adversary errors.
 %It is generally difficult to provide efficient decoding algorithms for nonlinear codes due to their complex algebraic structures. Fortunately, we successfully construct efficient decoding algorithms for erasure and adversary errors separably.

\end{abstract}

\section{Introduction}
Given a code alphabet, determining the largest size $A_q(n,d)$  of $q$-ary block codes $\mC$ with length $n$ and Hamming minimum distance $d$ has been a long-standing problem in coding theory.
%Let $A_q(n,d)$ denote the largest size of $q$-ary block codes with length $n$ and Hamming minimum distance $d$ has been a central problem in coding  theory.
There are a large amount of papers in literatures to study the quantity $A_q(n,d)$. When $d$ is small or even a constant, BCH codes and Hamming codes usually have large size in this parameter regime. The other way to see whether a code has good parameters, one can simply look into its redundancy.  Thus, we can define the smallest possible redundancy by $r_q(n,d)=n-\log_qA_q(n,d)$ for given alphabet size $q$, length $n$ and minimum distance $d$. Of course, one would like to have this redundancy  $r_q(n,d)$ as small as possible.
In this paper, we mainly focus on $r_q(n,d)$ (or equivalently $A_q(n,d)$) with small $d$.

\subsection{Known results}
There are various upper bounds on $A_q(n,d)$ (see \cite{LX2004}) such as the Hamming bound (a relatively tight bound  for small distance like $d=3$), the Plotkin bound (a relatively tight bound  for large distance such as $d> (q-1)n/q$), the Grismer bound (a  bound  only for linear codes) and  the linear programming bound. A benchmark for a good code is the Gilbert-Varshamov bound--a lower bound on $A_q(n,d)$.  There are two versions of the Gilbert-Varshamov bound, one is called the weak version and the other is called strong version (see Section 2 for the detail). The strong version is applicable only for linear codes (see \cite{LX2004}).

Apart from some well-known families of codes such as Hamming codes, Reed-Solomon codes, BCH codes, Reed-Muller codes, Goppa codes and algebraic geometry codes, etc, there are also various constructions of linear and nonlinear codes  that provide lower bounds on $A_q(n,d)$ for some specific parameters. As there are too many such constructions in literatures, we are not going to mention these constructions one by one. {The reader may refer some books \cite{LX2004,MS1983,M2005,V1992} and the online table \cite{online1} for some of these constructions.} We would like to particularly discuss Hamming codes. Hamming codes have minimum distance $3$ and achieve the Hamming bound. Therefore,  Hamming codes are optimal in the sense that the codes achieve the maximal size $A_q(n,3)$. However, there are  some constraints on parameters for Hamming codes. Firstly, as Hamming codes are linear codes, thus we require that $q$ is a prime power. Secondly, the code length of Hamming codes are of the form $\frac{q^r-1}{q-1}$ for $r\ge 2$. Thus, except for some specific parameters, in general we do not know exact values of $A_q(n,3)$ if $q$ is not a prime power; or $n$ is not of the form $\frac{q^r-1}{q-1}$. We do not even know if $r_q(n,3)<n-k_q(n,3)$ for a prime power $q$, where $k_q(n,d)$ is the maximum dimension $k$ such that there exists a $q$-ary $[n,k,3]$-linear code (this means that $n-k_q(n,3)$ is the smallest redundancy for $q$-ary linear codes of length $n$ and distance $3$).

%{\color{red}erasure decoding algorithm and adversary decoding algorithm}

\iffalse
Now we move to discussion of the asymptotic redundancy ratio $c(q,d)$. By the Hamming bound (see \eqref{eq:4x} below), we have the lower bound $c(q,d)\ge \left\lfloor\frac{d-1}2\right\rfloor$. On the other hand, By the (weak) Gilbert-Varshamove bound (see \eqref{eq:4y} below), we have upper bound $c(q,d)\le d-1$. These two bounds are applicable for all alphabet size $q>1$.

If $q$ is a prime power, we have a few better upper bounds on $c(q,d)$. Firstly, by the (strong) Gilbert-Varshamove bound (see \eqref{eq:4z} below), we have upper bound $c(q,d)\le d-2$. This bound is  improved to
\begin{equation}\label{eq:0x}
c(q,d)\le \max\left\{ \left\lceil\frac{(d-2)(q-1}q\right\rceil,d-3+\frac1{d-2}\right\}
\end{equation}
for $d\ge 3$ (see \cite{}). The bound \eqref{eq:0x} is further improved for some specific values of prime power $q$ and $d$ such as $q=3,4$ and $d=4,5,6$ (see \cite{} for the details). In conclusion, the upper bound  $c(q,d)\le d-1$ is still the best one if $q$ is not a prime power.
\fi

\subsection{Our results}
In this paper, we present a construction of nonlinear codes with low redundancy that is inspired by the construction of Varshamov-Tenengolts codes via {check-sums}. As a result, in general our codes have lower redundancy than linear codes for given code length and distance. In particular, when minimum distance $d$ is fixed,  one can show that our codes have smaller redundancy than linear codes if the code length $n$ lies in some intervals.

 Although it is generally difficult to design efficient algorithms for nonlinear codes, we present quasi-linear time decoding algorithms for our codes for both erasure and adversary errors. More precisely speaking, the decoding complexity is $O(n\log^4n)$ bit operations for both erasure and adversary errors. In addition, if distance $d$ is constant, the  decoding complexity is $O(\log^2n)$ and $O(n\log^2n)$ bit operations for erasure and adversary errors, respectively.

\subsection{Organization of the paper}
The paper is organized as follows. In Section 2, we present some preliminaries on codes including definitions of $A_q(n,d)$, $r_q(n,d)$, the main MDS conjecture and the relation between distance and erasure error correcting capability. In Section 3, we provide an explicit construction of the nonlinear code with low redundancy. Furthermore, some numerical examples are given in Section 3 as well. In the last section, decoding algorithms of our nonlinear codes constructed in Section 3 are presented.

\section{Preliminary}
\subsection{Some notations on codes}
 Let $\F_q$ be a finite field with $q$ elements.
For two integers $a, b$ with $a\le b$, denote by $[a,b]$ the set $\{a,a+1,\dots,b\}$. In particular, denote by $[n]$ the set $[1,n]$. A $q$-ary code $\mC$ of length $n$ is a subset of $[0,q-1]^n$. If the size of $\mC$ is $M$, we say that $\mC$ is a $q$-ary $(n,M)$-code or an $(n,M)_q$-code. Furthermore, if the Hamming distance of $\mC$ is $d$, we say that $\mC$ is  a $q$-ary $(n,M,d)$-code. It is well known that a  code with minimum Hamming distance  $d$ can correct $\lfloor\frac{d-1}{2}\rfloor$ adversary errors.
It is clear that if there is a $q$-ary $(n,M,d_1)$-code with $d_1>d$, then there is also a $q$-ary $(n,M,d)$-code. This is because we can turn a $q$-ary $(n,M,d_1)$-code into a $q$-ary $(n,M,d)$-code by changing every codeword of a fixed set of $d_1-d$ positions to $0$.
For a $q$-ary linear code $\mC$ with length $n,$ dimension $k$ and minimum distance $d$, we say that $\mC$ is a $q$-ary $[n,k,d]$-linear code. %The parity-check matrix of $\mC$ is $H$ with size $(n-k)\times n.$
\subsection{Optimal linear and nonlinear codes}
In coding theory, it is a great challenge to determine the maximal size of $q$-ary codes for given length $n$ and minimum distance $d$. The following quantity characterizes this maximum size:
\begin{equation}\label{eq:1}
A_q(n,d)=\max\{M\in\ZZ_{>0}:\; \mbox{there is a $q$-ary $(n,M,d)$ code}\}.
\end{equation}
An $(n,M,d)_q$-code is called optimal if $M=A_q(n,d)$.

Similarly, when $q$ is a prime power, we can define maximal size of $q$-ary linear codes for given length $n$ and minimum distance $d$ as follows:
\begin{equation}\label{eq:2}
K_q(n,d)=\max\{M\in\ZZ_{>0}:\; \mbox{there is a $q$-ary $(n,M,d)$-linear code}\}.
\end{equation}
An $[n,k,d]_q$-code is called optimal if $k=\log_qK_q(n,d)$.

By the simple prorogation rules mentioned above (also see \cite[Chapter 6]{LX2004}), we know that $A_q(n,d)$ is a decreasing function of $d$ for given $q$ and $n$, while $A_q(n,d)$ is an increasing function of $n$ for given $q$ and $d$. Usually the quantities $A_q(n,d)$ and $K_q(n,d)$ are large integers. Thus, we define the following two normalized quantities
\begin{equation}\label{eq:3}
a_q(n,d)=\log_qA_q(n,d);\quad k_q(n,d)=\log_qK_q(n,d).
\end{equation}
Note that $a_q(n,d)$ may not be an integer, while $k_q(n,d)$ is always an integer which is the dimension of a code.

Recall, $r_q(n,d)=n-a_q(n,d)$ is the smallest redundancy of $q$-ary codes of length $n$ and minimum distance $d$. Now we define $r^L_q(n,d)=n-k_q(n,d)$ to be the smallest redundancy of $q$-ary linear codes of length $n$ and minimum distance $d$. Then it is clear that $r_q(n,d)\le r_q^L(n,d)$. It is not clear where $r_q(n,d)$ is strictly less than $r_q^L(n,d)$ in general.

\subsection{Defect} In this subsection, we assume that $q$ is a prime power. Then by the Singleton bound, we know that every $q$-ary $[n,k,d]$-linear code $\mC$ obeys
\begin{equation}\label{eq:x1}
k\le n-d+1.
\end{equation}
When the equality in \eqref{eq:x1} holds, $\mC$ is called a maximum distance separable (MDS for short) code.

One of the main problems for MDS codes is to determine the maximum length of an MDS code. The following is an important conjecture.

{\textbf{ Main Conjecture on MDS codes.}} For a nontrivial $q$-ary $[n,k,n-k+1]$-MDS code, we have
\[n\le \left\{\begin{array}{ll}
q+2&\mbox{ if $2|q$ and $k\in\{3,q-1\}$};\\
q+1&\mbox{ otherwise}.
\end{array}
\right.\]

A $q$-ary $[n,k,d]$-linear code satisfying
\begin{equation}\label{eq:x2}
k= n-d+1-r
\end{equation}
is said to have defect $r$. It is clear that codes with defect $0$ are MDS codes. A code with defect $1$ is called a almost MDS code. We denote by $N_q(d,r)$ the largest length $n$ of  $q$-ary $[n,n-d+1-r,d]$-linear codes.

%\shunote{Could we consider providing this sentence in subsection of generalized RS codes?}

By generalized Reed-Solomon codes and their extended codes, we know that $N_q(d,0)\ge q+1$. On the other hand, the main MDS conjecture tells us  that $N_q(d,0)\le q+2$.

The values $N_q(3,0)$ and $N_q(3,1)$ are completely determined \cite{BB1952}. For completeness, we provide a short proof below.
\begin{lemma}\label{lem:2.1} For $r\ge 0$, one has
 \[N_q(3,0)= q+1,\quad N_q(3,1)=q^2+q+1.\]
\end{lemma}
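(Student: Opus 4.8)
The plan is to pass to parity-check matrices and reinterpret the condition $d\ge 3$ geometrically. A $q$-ary $[n,n-d+1-r,d]$-linear code has redundancy $n-k=d-1+r$, so for $d=3$ it is governed by a parity-check matrix $H$ with $d-1+r$ rows: two rows when $r=0$ and three rows when $r=1$. Recall that a linear code with parity-check matrix $H$ has minimum distance $\ge 3$ precisely when every two columns of $H$ are linearly independent over $\F_q$, i.e. when the columns are nonzero and pairwise non-proportional. Thus the columns of $H$ may be viewed as distinct points of the projective space $\PP^{m-1}(\F_q)$, where $m$ denotes the number of rows, and the length $n$ equals the number of chosen columns.

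First I would establish the upper bounds. Since $\PP^{m-1}(\F_q)$ has exactly $\frac{q^m-1}{q-1}$ points, any parity-check matrix with $m$ rows and pairwise-independent columns has at most $\frac{q^m-1}{q-1}$ columns. For $m=2$ this gives $n\le q+1$, and for $m=3$ it gives $n\le q^2+q+1$. Hence $N_q(3,0)\le q+1$ and $N_q(3,1)\le q^2+q+1$.

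For the matching lower bounds I would exhibit a matrix attaining each value by taking $H$ to have as columns exactly one representative of every projective point. For $m=2$ the resulting code automatically has $d=3$: the Singleton bound \eqref{eq:x1} forces $d\le n-k+1=3$, while pairwise independence of the columns gives $d\ge 3$, yielding a $[q+1,q-1,3]$ code. For $m=3$ the Singleton bound only gives $d\le 4$, so I would additionally verify that the distance is exactly $3$ rather than $4$, i.e. that some three columns are linearly dependent. This is where the only genuine care is needed: three distinct projective points are linearly dependent exactly when they are collinear, and since $q\ge 2$ every projective line in $\PP^2(\F_q)$ carries $q+1\ge 3$ points, so a collinear triple exists and produces a codeword of weight $3$. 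Therefore $d=3$ and the code has the required parameters $[q^2+q+1,\,q^2+q-2,\,3]$.

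Combining the two directions yields $N_q(3,0)=q+1$ and $N_q(3,1)=q^2+q+1$, as claimed. The argument is essentially the classical identification of such codes with point sets in projective space (indeed the extremal codes are the $q$-ary Hamming codes of redundancy $2$ and $3$); the main, and only slightly delicate, obstacle is confirming the exactness of the distance in the defect-$1$ case, which is handled by the collinearity observation above.
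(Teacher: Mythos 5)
Your proof is correct and follows essentially the same route as the paper: identify the columns of a parity-check matrix with distinct one-dimensional subspaces (projective points) of $\F_q^2$ or $\F_q^3$ and count them to get $q+1$ and $q^2+q+1$. You are somewhat more careful than the paper in the defect-$1$ case, where you verify via a collinear triple that the Hamming-type code has distance exactly $3$ rather than $4$; the paper's proof leaves this (and the attainability direction generally) implicit.
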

\begin{proof} If $\mC$ is an $[n,n-2,3]$-MDS code, then its parity-check matrix has size $2\times n$. As the distance of $\mC$ is $3$, any two coloumns are linearly independent. This means that one-dimensional spaces spanned by columns of $H$ are pairwise distinct. As there are $\frac{q^2-1}{q-1}=q+1$ one-dimensional spaces in $\F_q^2$, the desired result follows.

The similar arguments can be used to show $N_q(3,1)=\frac{q^3-1}{q-1}=q^2+q+1$. In this case, we consider one-dimensional spaces in $\F_q^3$.
\end{proof}

\iffalse
$N_q(d,r)$ denotes the largest length $n$ of  $q$-ary $[n,n-d+1-r,d]$-linear codes. Then $N_q(3,1)=q^2+q+1$.
\subsection{Asymptotic redundancy}
If the minimum distance $d$ is proportional to the length $n$, $a_q(n,d)$ is also proportion to $n$. In coding theory, people are also interested in the scenario where the minimum distance $d$ is constant. In this case, we want to know how large the redundancy $n- a_q(n,d)$ is. More precisely, we want to know the ratio $\frac{n- a_q(n,d)}{\log_qn}$. In order to study asymptotic behavior of this ratio, we define
\begin{equation}\label{eq:4}
c(q,d)=\liminf_{n\rightarrow\infty}\frac{n- a_q(n,d)}{\log_qn}.
\end{equation}
By the Hamming bound \cite{}, we have
\begin{equation}\label{eq:4x}a_q(n,d)\le n-\log_q\left(\sum_{i=0}^{\lfloor(d-1)/2\rfloor}{n\choose i}(q-1)^i\right).\end{equation}
Thus, we get $c(q,d)\ge \lfloor\frac{d-1}2\rfloor$. On the other hand, by the (weak) Gilbert-Varshamov bound \cite{}, we have
\begin{equation}\label{eq:4y} A_q(n,d)\ge\left\lfloor\frac{q^n}{\sum_{i=0}^{d-1}{n\choose i}(q-1)^i}\right\rfloor+1.\end{equation}
We obtain $c(q,d)\le d-1$. Furthermore, if $q$ is prime power, then
\begin{equation}\label{eq:4z} a_q(n,d)\ge n-\log_q\left\lfloor\sum_{i=0}^{d-2}{n-1\choose i}(q-1)^i\right\rfloor.\end{equation}
This gives $c(q,d)\le d-2$ if $q$ is a prime power.
\fi

\subsection{Erasure errors and minimum distance}
It is well known that minimum distance of a code determines erasure error correcting capability. Informally, we say that a $q$-ary code $C\subseteq [0,q-1]^n$ can correct $\tau$  erasure errors if any $\tau$ positions of a codeword are erased, we can still recover this codeword. Precisely speaking, a $q$-ary code $\mC\subseteq [0,q-1]^n$ can correct $\tau$  erasure errors if for any subset $S\subset[n]$ with $|S|= n-\tau$ and a codeword $\bc\in\mC$, no other codewords $\bb\in\mC$ satisfy $\bb_S=\bc_S$, where $\bb_S$ is the projection of $\bb$ at $S$. The following lemma follows immediately.
\begin{lemma}\label{lem:2.2}
A code $\mC$ has minimum distance at least $d$ if and only if it can correct $d-1$ erasure errors.
\end{lemma}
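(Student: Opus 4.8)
The plan is to prove both implications by unwinding the definition of erasure correction into a statement about how many coordinates two distinct codewords may share, using the elementary identity that the number of positions on which two words $\bb,\bc\in[0,q-1]^n$ agree equals $n$ minus their Hamming distance. Setting $\tau=d-1$, correcting $d-1$ erasures means precisely that no two distinct codewords agree on any subset $S\subset[n]$ of size $n-(d-1)=n-d+1$.

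For the forward direction, I would assume $\mC$ has minimum distance at least $d$ and deduce that it corrects $d-1$ erasures. Suppose, toward a contradiction, that there were some $S\subset[n]$ with $|S|=n-d+1$ and distinct codewords $\bb,\bc\in\mC$ with $\bb_S=\bc_S$. Then $\bb$ and $\bc$ agree on at least $n-d+1$ coordinates, hence differ in at most $d-1$ coordinates, so their Hamming distance is at most $d-1<d$, contradicting the minimum distance assumption.

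For the converse, I would assume $\mC$ corrects $d-1$ erasures and show by contraposition that its minimum distance is at least $d$. If the minimum distance were less than $d$, there would exist distinct codewords $\bb,\bc\in\mC$ at Hamming distance at most $d-1$; these agree on at least $n-d+1$ coordinates. Selecting any subset $S$ of size exactly $n-d+1$ among the coordinates on which they agree yields $\bb_S=\bc_S$ with $|S|=n-(d-1)$, contradicting the erasure-correcting hypothesis.

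Since the whole argument is a matter of counting agreement coordinates, there is no substantive obstacle; the only point demanding care is aligning the size $|S|=n-\tau$ from the definition with the choice $\tau=d-1$, and, in the converse, extracting a witness set $S$ of the \emph{exact} size $n-d+1$ when the two codewords happen to agree on strictly more than $n-d+1$ positions.
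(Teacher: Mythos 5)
Your proof is correct and is exactly the ``immediate'' argument the paper has in mind (the paper states the lemma without proof, remarking only that it follows at once from the definition of erasure correction). Both directions are handled properly, including the small point of trimming the agreement set down to a witness $S$ of size exactly $n-d+1$ in the converse.
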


\subsection{Generalized Reed-Solomon codes}
Let $F$ be a field and choose nonzero elements $\{v_1,v_2,\dots,v_n\}\subseteq F^*$ ($v_i$ are not necessarily distinct) and pairwise distinct elements $\{\Ga_1,\Ga_2,\dots,\Ga_n\}\subseteq F$. Put  $\bv=(v_1,\cdots, v_n)\in (F^*)^n$ and  $\bGa=(\Ga_1,\cdots, \Ga_n)\in F^n.$ For $0\le k\le n,$ the generalized Reed-Solomon  is defined by
\[\GRS_{n,k}(\bGa,\bv)=\{v_1f(\Ga_1),\cdots, v_n f(\Ga_n): f\in F[x]_{<k}\}.\]
Here, $F[x]_{<k}$ denotes the set of polynomial in $F[x]$ of degree less than $k$.

\begin{lemma}
$\GRS_{n,k}(\bGa,\bv)$ is an $[n,k,d]$-linear code over $F$ with length $n\le |F|.$ If $0<k< n,$ then $d=n-k+1.$ In particular, a generalized Reed-Solomon code is an MDS code.
\end{lemma}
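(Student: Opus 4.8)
The plan is to verify the three assertions in turn — linearity together with the length bound, dimension exactly $k$, and minimum distance exactly $n-k+1$ — all of which rest on the single elementary fact that a nonzero polynomial of degree less than $k$ has at most $k-1$ roots in the field $F$.

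First I would dispose of the easy parts. The length bound $n\le |F|$ is forced by the requirement that $\Ga_1,\dots,\Ga_n$ be \emph{pairwise distinct} elements of $F$. For linearity, consider the evaluation-and-scale map $\mathrm{ev}\colon F[x]_{<k}\to F^n$ sending $f\mapsto (v_1f(\Ga_1),\dots,v_nf(\Ga_n))$. Each coordinate $f\mapsto v_if(\Ga_i)$ is $F$-linear in $f$, so $\mathrm{ev}$ is $F$-linear and its image $\GRS_{n,k}(\bGa,\bv)=\Im(\mathrm{ev})$ is an $F$-subspace of $F^n$.

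Next I would compute the dimension by showing $\mathrm{ev}$ is injective in the regime $k\le n$. Suppose $\mathrm{ev}(f)=\bo$; then $v_if(\Ga_i)=0$ for every $i$, and since each $v_i\ne 0$ we get $f(\Ga_i)=0$ at all $n$ distinct points $\Ga_i$. A nonzero polynomial of degree $<k\le n$ cannot have $n$ distinct roots, so $f=0$. Hence $\mathrm{ev}$ is injective and $\dim\GRS_{n,k}(\bGa,\bv)=\dim F[x]_{<k}=k$. For the minimum distance, since the code is linear it suffices to bound the minimum Hamming weight of a nonzero codeword. For $f\in F[x]_{<k}$ with $f\ne 0$, the zero coordinates of $\mathrm{ev}(f)$ are exactly the indices $i$ with $f(\Ga_i)=0$ (again using $v_i\ne 0$), and there are at most $\deg f\le k-1$ of these because the $\Ga_i$ are distinct. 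Thus $\wt(\mathrm{ev}(f))\ge n-(k-1)=n-k+1$, giving $d\ge n-k+1$. Combined with the Singleton bound \eqref{eq:x1}, which yields $d\le n-k+1$, we conclude $d=n-k+1$, so the code is MDS. As an alternative to invoking Singleton, one may simply exhibit the polynomial $f(x)=\prod_{i=1}^{k-1}(x-\Ga_i)$ of degree $k-1$, whose codeword vanishes at precisely $k-1$ positions and so has weight exactly $n-k+1$, showing the lower bound is attained.

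There is no genuine obstacle in this argument; the entire proof collapses to the root-count for polynomials. The only point requiring a little care is the case handling: the distance claim is stated for $0<k<n$ so that the code is nontrivial (for $k=n$ the evaluation map can fail to separate weights in the Singleton sense, and $k=0$ is degenerate), while the injectivity step needs only $k\le n$, which is exactly the range under consideration.
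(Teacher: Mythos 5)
Your proof is correct. The paper itself states this lemma without proof, treating it as a standard fact about generalized Reed--Solomon codes, and your argument is exactly the standard one: linearity and dimension from the injectivity of the evaluation map (root counting for polynomials of degree less than $k\le n$), the lower bound $d\ge n-k+1$ from the same root count applied to nonzero codewords, and equality via the Singleton bound (or the explicit witness $\prod_{i=1}^{k-1}(x-\Ga_i)$). There is nothing to compare against and no gap to report.
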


The dual of the  generalized Reed-Solomon code $\GRS_{n,k}(\bGa,\bv)$ is given by
\[\GRS_{n,k}(\bGa,\bv)^\perp=\GRS_{n,n-k}(\bGa,\bu),\]
where $\bu=(u_1,\cdots, u_n)$ with $u_i^{-1}=v_i\Pi_{j\neq i}(\Ga_i-\Ga_j).$ Thus, the dual code of a generalized Reed-Solomon code is also an MDS code.

Decoding of generalized Reed-Solomon codes is  of both practical and theoretical importantance. The most widely known decoding is the syndrome-based Reed-Solomon codes decoding, in which the key equation is solved using either the Berlekamp-Massey algorithm, the Euclidean algorithm or fast Fourier tranform. For an $[n,k]$-generalized Reed-Solomon code, the computational complexity of syndrome-based decoding is $O(n\log n+(n-k)\log^2(n-k))$ operations of field elements \cite{{TH2002}}.
\section{Construction}

In this section, we provide an explicit construction of nonlinear code with low redundancy. Our construction follows the idea of VT codes with additional parity-check from {\color{red}a} RS code.  Some numerical examples are provided to show that our codes have lower redundancy than linear codes for  given length $n$ and minimum distance $d$.

Let $q$ be an integer greater than $1$ and let $d,n$ be two positive integers greater than $2$. Let  $\ell$ be the smallest prime satisfying $\ell\ge \max\{n,q\}$.  For $\bx=(x_1,x_2,\dots,x_n)\in[0,q-1]^n$, define the functions
\begin{equation}\label{eq:5}
\quad t_j(\bx)=\sum_{i=1}^ni^jx_i
\end{equation}
for $j\ge 0$.

Choose  $b_0,b_1,b_1,\dots,b_{d-2}$ with $b_0\in[0,(d-1)(q-1)]$ and $b_i\in[0,\ell-1]$ for $1\le i\le d-2$ and put $\bb=(b_0,b_1,\dots,b_{d-2})$.
Define the code
\begin{equation}\label{eq:6} \begin{aligned}
\mC_d(\bb)=\{\bGa:=(\Ga_1,\Ga_2,\dots,\Ga_n)\in[0,q-1]^n:\; t_j(\bGa)\equiv b_j\pmod{\ell}\\ \mbox{  for $1\le j\le d-2$};\;
 t_0(\bGa)\equiv b_0\pmod{(d-1)(q-1)+1} \}
  \end{aligned}
 \end{equation}

 Now we show that the minimum distance of  the code $\mC_d(\bb)$  is at least $d$.
 \begin{lemma}\label{lem:3.1} For $d\ge 3$, the code $\mC_d(\bb)$ given in \eqref{eq:6} has Hamming distance at least $d$.
 \end{lemma}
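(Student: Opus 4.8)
The plan is to prove the distance bound in contrapositive form: I will show that if $\bGa,\bGa'\in\mC_d(\bb)$ agree in all but at most $d-1$ coordinates, then necessarily $\bGa=\bGa'$, so that no two distinct codewords can lie at Hamming distance $\le d-1$. Set $\be=\bGa-\bGa'\in\ZZ^n$ and $S=\supp(\be)$; by hypothesis $|S|\le d-1$, and each entry satisfies $|e_i|\le q-1$ since the coordinates come from $[0,q-1]$. Because both vectors satisfy the defining congruences of \eqref{eq:6}, subtracting them cancels the constants $b_j$ and leaves $t_j(\be)\equiv 0\pmod{\ell}$ for $1\le j\le d-2$ together with $t_0(\be)\equiv 0\pmod{(d-1)(q-1)+1}$. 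The goal is to force $\be=\bo$.

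The first key step is to upgrade the $j=0$ congruence to an exact identity over $\ZZ$. Since $t_0(\be)=\sum_{i\in S}e_i$ is a sum of at most $d-1$ terms, each of absolute value at most $q-1$, we have $|t_0(\be)|\le (d-1)(q-1)<(d-1)(q-1)+1$; a multiple of $(d-1)(q-1)+1$ of this magnitude must vanish, so $t_0(\be)=0$ exactly. This is where the special modulus in the construction earns its keep: it converts the cardinality-plus-magnitude bound on $S$ into the honest equation $\sum_{i\in S}e_i=0$, which in particular gives $t_0(\be)\equiv 0\pmod{\ell}$ and hence a $(d-1)$-st congruence modulo $\ell$ to accompany the $d-2$ already at hand.

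I would then read the resulting $d-1$ relations $\sum_{i\in S}i^{j}e_i\equiv 0\pmod{\ell}$, for $0\le j\le d-2$, as a homogeneous linear system over $\F_\ell$ in the unknowns $(e_i\bmod\ell)_{i\in S}$. The coefficient matrix is Vandermonde, with nodes the positions $i\in S\subseteq[n]$; since $\ell$ is prime with $\ell\ge n$, distinct positions remain distinct modulo $\ell$ (as $0<|i-i'|\le n-1<\ell$), so any $|S|\times|S|$ row-subblock—say the rows $j=0,\dots,|S|-1$, which are available precisely because $|S|\le d-1$—is an invertible square Vandermonde matrix over $\F_\ell$. Hence the only solution is $e_i\equiv 0\pmod{\ell}$ for every $i\in S$. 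Finally, since $\ell\ge q$ forces $|e_i|\le q-1<\ell$, each such $e_i$ is genuinely $0$ in $\ZZ$, whence $\be=\bo$ and $\bGa=\bGa'$, giving minimum distance at least $d$.

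I expect the delicate points to be bookkeeping rather than deep: checking that both moduli are large enough (the magnitude bound $|t_0(\be)|<(d-1)(q-1)+1$, and the node bound $|i-i'|<\ell$ plus the entry bound $|e_i|<\ell$), and confirming that it is the primality of $\ell$ that guarantees invertibility of the Vandermonde block over $\F_\ell$. The conceptual crux—the step I would flag as the heart of the argument—is the realization that the $d-2$ power-sum congruences mod $\ell$ fall exactly one equation short of pinning down a support of size $d-1$, and that the $0$-th check-sum, taken modulo $(d-1)(q-1)+1$ rather than $\ell$, supplies precisely that missing equation.
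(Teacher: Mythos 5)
Your proof is correct and follows essentially the same route as the paper's: the paper argues via erasure recovery (Lemma~\ref{lem:2.2}), recovering $\sum_{i\in S}\Ga_i$ exactly as an integer from the modulus $(d-1)(q-1)+1$ and then inverting the same Vandermonde system over $\ZZ_\ell$, which is just the erasure-decoding phrasing of your difference-vector argument. Your version is, if anything, slightly more explicit about why primality of $\ell$ and the bounds $\ell\ge n$ and $\ell\ge q$ are needed (distinct nodes mod $\ell$ and $|e_i|<\ell$), details the paper leaves implicit.
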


\begin{proof} By Lemma \ref{lem:2.2}, it suffices to show that the code  $\mC_d(\bb)$ can correct $d-1$ erasure errors. Let $S:=\{k_1,k_2,\dots,k_{d-1}\}$ be a subset of $[n]$ with $k_1<k_2<\cdots<k_{d-1}$. Assume that a codeword $\bGa\in \mC_d(\bb)$ are erased at positions of $S$, i.e., $(\Ga_{k_1},\Ga_{k_2},\dots,\Ga_{k_{d-1}})$ is erased. Put $c=\sum_{i\in[n]\setminus S}\Ga_i$. Then we can compute \[\sum_{i\in S}\Ga_i=\sum_{i=1}^n\Ga_i- \sum_{i\in[n]\setminus S}\Ga_i\equiv b_0-c\pmod{(d-1)(q-1)+1},\] since $t_0(\bGa)=b_0$. Let $c_0\in[0,(d-1)(q-1)]$ with $c_0\equiv b_0-c\pmod{(d-1)(q-1)+1}$. Then we have $\sum_{i\in S}\Ga_i=c_0$ since $\sum_{i\in S}\Ga_i\le |S|(q-1)=(d-1)(q-1)$.
Hence, we have $\sum_{i\in S}\Ga_i\pmod{\ell}=c_0$.

For $1\le j\le d-2$, we can also compute
\[c_j:=b_j-\sum_{i\in[n]\setminus S}i^j\Ga_i\pmod{\ell}= \sum_{i=1}^ni^j\Ga_i-\sum_{i\in[n]\setminus S}i^j\Ga_i\pmod{\ell}=\sum_{i\in S}i^j\Ga_i\pmod{\ell}.\]
This means that the vector $(\Ga_{k_1},\Ga_{k_2},\dots,\Ga_{k_{d-1}})$ is the unique solution of the following equation
\[\begin{pmatrix}
1&1&\cdots&1\\
k_1&k_2&\cdots&k_{d-1}\\
\vdots&\vdots&\vdots&\vdots\\
k_1^{d-2}&k_2^{d-2}&\cdots&k_{d-1}^{d-2}
\end{pmatrix}
\bx^T=\bc^T\pmod{\ell},
\]
where $\bc=(c_0,c_1,\dots,c_{d-2})$. Note that the above matrix is a $(d-1)\times (d-1)$ Vandermonde matrix which is invertible.
This completes the proof.
\end{proof}
The following lower bound on $A_q(n,d)$ can be easily derived from the above Lemma.
 \begin{theorem}\label{thm:3.2} For $d\ge 3$, let $\ell$ be the smallest prime satisfying $\ell\ge\max\{q,n\}$. Then
 one has
 \[A_q(n,d)\ge \frac{q^n}{((d-1)(q-1)+1) \ell^{d-2}}.\]

 \end{theorem}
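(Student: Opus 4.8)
The plan is to derive the lower bound on $A_q(n,d)$ from Lemma \ref{lem:3.1} by a counting argument. By Lemma \ref{lem:3.1}, for every choice of $\bb=(b_0,b_1,\dots,b_{d-2})$ the code $\mC_d(\bb)$ has minimum Hamming distance at least $d$. Hence each individual $\mC_d(\bb)$ is a $q$-ary code of length $n$ and distance at least $d$, so $A_q(n,d)\ge |\mC_d(\bb)|$ for every $\bb$. The strategy is therefore to show that \emph{some} choice of $\bb$ yields a code $\mC_d(\bb)$ whose size is at least the claimed quantity $\frac{q^n}{((d-1)(q-1)+1)\,\ell^{d-2}}$.

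**First I would** observe that the defining congruence conditions in \eqref{eq:6} partition the full space $[0,q-1]^n$ according to the value of the tuple
\[
\left(t_0(\bGa)\bmod ((d-1)(q-1)+1),\; t_1(\bGa)\bmod\ell,\;\dots,\;t_{d-2}(\bGa)\bmod\ell\right).
\]
Indeed, each $\bGa\in[0,q-1]^n$ lies in exactly one $\mC_d(\bb)$, namely the one indexed by its own residue tuple, so the sets $\{\mC_d(\bb)\}_{\bb}$ form a disjoint partition of $[0,q-1]^n$. The number of admissible tuples $\bb$ is at most $((d-1)(q-1)+1)\cdot \ell^{d-2}$, since $b_0$ ranges over $[0,(d-1)(q-1)]$ (giving $(d-1)(q-1)+1$ values) and each of $b_1,\dots,b_{d-2}$ ranges over $[0,\ell-1]$ (giving $\ell$ values each).

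**Next I would** apply the pigeonhole principle: since the total number of vectors being partitioned is $|[0,q-1]^n|=q^n$, and they are distributed among at most $((d-1)(q-1)+1)\,\ell^{d-2}$ classes, at least one class $\mC_d(\bb)$ must contain at least the average number of vectors, that is,
\[
\max_{\bb}|\mC_d(\bb)|\ge \frac{q^n}{((d-1)(q-1)+1)\,\ell^{d-2}}.
\]
Combining this with the distance bound from Lemma \ref{lem:3.1} gives the claimed inequality for $A_q(n,d)$.

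**I do not expect any serious obstacle here**, since this is a standard averaging/pigeonhole argument once the partition structure is in place. The only point requiring a little care is verifying that the residue tuple genuinely determines a partition — i.e. that the maps $t_j$ really do land in the stated residue ranges and that every $\bGa$ is counted exactly once — but this is immediate from the definitions. One minor subtlety worth flagging is that the bound counts tuples $\bb$ rather than nonempty codes, so the true number of occupied classes could be smaller; this only strengthens the conclusion, so it poses no difficulty for the stated lower bound.
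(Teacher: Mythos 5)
Your proposal is correct and follows essentially the same route as the paper's proof: both observe that the codes $\mC_d(\bb)$, as $\bb$ ranges over $[0,(d-1)(q-1)]\times[0,\ell-1]^{d-2}$, cover all of $[0,q-1]^n$, and then apply an averaging (pigeonhole) argument to find one $\bb$ with $|\mC_d(\bb)|\ge q^n/\bigl(((d-1)(q-1)+1)\ell^{d-2}\bigr)$, combined with the distance guarantee of Lemma \ref{lem:3.1}. Your additional remark that the classes actually form a disjoint partition is a harmless strengthening of the paper's covering inequality.
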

\begin{proof} By Lemma \ref{lem:3.1}, it suffices to show that there exists a vector $\bb\in[0,(d-1)(q-1)]\times [0,\ell-1]^{d-2}$ such that the code $\mC_d(\bb)$ is an $(n,M)$ code with
\[M\ge \frac{q^n}{((d-1)(q-1)+1) \ell^{d-2}}.\]

It is clear that
\[\bigcup_{\bb\in[0,(d-1)(q-1)]\times[0,\ell-1]^{d-2}}\mC_d(\bb)=[0,q-1]^n.\]
This gives
\[q^n=\left|[0,q-1]^n\right|=\left|\bigcup_{\bb\in[0,(d-1)(q-1)]\times[0,\ell-1]^{d-2}}\mC_d(\bb)\right|\le\sum_{\bb\in[0,(d-1)(q-1)]\times[0,\ell-1]^{d-2}}|\mC_d(\bb)|.\]
This implies that when  $\bb$ runs through  $[0,(d-1)(q-1)]\times[0,\ell-1]^{d-2}$, the average size of $\mC_d(\bb)$ is at least $ \frac{q^n}{((d-1)(q-1)+1) \ell^{d-2}}$. The proof is completed.
\end{proof}

\begin{ex}{\rm In this example, we show that $r_q(n,3)$ is strictly less than $r_q^L(n,3)$ for some parameters.

\begin{itemize}
\item[(i)]
Take $q=4$ and $d=3$.
\begin{itemize}
\item[1)]For $n\in[30,31]$, we can take $\ell=31$, then we have $\ell\ge\max\{q,n\}$. By Theorem \ref{thm:3.2}, we have $r_4(n,3)=n-a_4(n,3)\le 3.88.$ On the other hand, by the online table of \cite{online}, we have $r_4^L(n,3)=4$ for $30\le n\le 31$.

% $a_4(n,3)\ge n-\log_4(7\times 31)\approx n-3.88$. On the other hand, by the online table of \cite{}, we have $k_4(n,3)=n-4$ for $30\le n\le 31$.

 \item[2)]   For $n\in[24,29]$, we can take $\ell=29$, then we have $\ell\ge\max\{q,n\}$. By Theorem \ref{thm:3.2}, we have  $r_4(n,3)=n-a_4(n,3)\le 3.83.$ On the other hand, by the online table of \cite{online}, we have $k_4(n,3)=n-4$ which implies $r_4^L(n,3)=4$ for $24\le n\le 29$.

    %Take $q=4$ and $d=3$. For $n\in[30,31]$, we can take $\ell=31$, then we have $\ell\ge\max\{q,n\}$. By Theorem \ref{thm:3.2}, we have $r_4(n,3)=n-a_4(n,3)\le 3.88.$ On the other hand, by the online table of \cite{online}, we have $r_4^L(n,3)=4$ for $30\le n\le 31$.

\item[3)]    Similarly, for $n\in[22,23]$, we can take $\ell=23$, then we have $\ell\ge\max\{q,n\}$. By Theorem \ref{thm:3.2}, we have $ r_4(n,3)=n-a_4(n,3)\le 3.67.$
    %$a_4(n,3)\ge n-\log_4(7\times 23)\approx n-3.67$.
    On the other hand, by the online table of \cite{online}, we have $r_4^L(n,3)=4$ for $22\le n\le 23$.

  \item[4)]   For $n\in[86,87]$, we can take $\ell=87$, then we have $\ell\ge\max\{q,n\}$. By Theorem \ref{thm:3.2}, we have $r_4(n,3)=n-a_4(n,3)\le 4.63.$ %$a_4(n,3)\ge n-\log_4(7\times 87)\approx n-4.63$.
     On the other hand, by the online table of \cite{online}, we have $r_4^L(n,3)=5$ for $86\le n\le87$.
     \end{itemize}

     We have many other instances of parameters for which our Theorem \ref{thm:3.2} shows that $r_4(n,3)<r^L_4(n,3)$. We tablet some of these parameters in the following table.
     \begin{center}
      Table I\\{Nonlinear codes with lower redundancy when $q=4$ and $d=3$}
      \medskip
     \begin{tabular}{|c|c|c|c|c|} \hline\hline
     $q$&$n$&$d$&Upper bound on $r_q(n,d)$& $r^L_q(n,d)$\\ \hline
       4&22&3&3.67&4\\ \hline
        4&23&3&3.67&4\\ \hline
       4&24&3&3.83&4\\ \hline
        4&25&3&3.83&4\\ \hline
         4&26&3&3.83&4\\ \hline
         4&27&3&3.83&4\\ \hline
        4&28&3&3.83&4\\ \hline
         4&29&3&3.83&4\\ \hline
         4&30&3&3.88&4\\ \hline
        4&31&3&3.88&4\\ \hline
          4&86&3&4.63&5\\ \hline
       4&87&3&4.63&5\\ \hline
        4&88&3&4.64&5\\ \hline
       4&89&3&4.64&5\\ \hline
        4&90&3&4.7&5\\ \hline
         4&91&3&4.7&5\\ \hline
         4&92&3&4.7&5\\ \hline
        4&93&3&4.7&5\\ \hline
         4&94&3&4.7&5\\ \hline
         4&95&3&4.7&5\\ \hline
        4&96&3&4.7&5\\ \hline
          4&97&3&4.7&5\\ \hline\hline
     \end{tabular}
     \end{center}
\item[(ii)] For $q=3$ and $d=3$,  we also have many other  instances of parameters for which our Theorem \ref{thm:3.2} shows that $r_3(n,3)<r^L_3(n,3)$. We tablet some of these parameters in the following table.
    \begin{center}
          Table II\\{Nonlinear codes with lower redundancy when $q=3$ and $d=3$} \medskip
     \begin{tabular}{|c|c|c|c|c|} \hline\hline
     $q$&$n$&$d$&Upper bound on $r_q(n,d)$&$r^L_q(n,d)$\\ \hline
       3&122&3&5.87&6\\ \hline
        3&123&3&5.87&6\\ \hline
       3&124&3&5.87&6\\ \hline
        3&125&3&5.87&6\\ \hline
         3&126&3&5.87&6\\ \hline
         3&127&3&5.87&6\\ \hline\hline

     \end{tabular}
     \end{center}

\end{itemize}
}\end{ex}

In Corollary~\ref{cor:3.4} and Corollary~\ref{cor:3.5}, we focus on minimum distance $d=3$ for defect $r=0$ and $r=1$ with code length $n$ belonging to some intervals.
\begin{cor}\label{cor:3.4} Let $q$ be a prime power. Then for any $n$ with $q+2\le n\le\frac{q^3}{4q-2}$, we have
\[r_q(n,3)<r^L_q(n,3).\]
\end{cor}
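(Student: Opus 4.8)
The plan is to show that on the stated interval the smallest linear redundancy is pinned at exactly $3$, while the counting bound of Theorem~\ref{thm:3.2} forces the (possibly nonlinear) redundancy strictly below $3$.

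First I would settle the linear side. Since $q+2\le n$, no $[n,n-2,3]$-MDS code exists, because Lemma~\ref{lem:2.1} gives $N_q(3,0)=q+1$; hence $k_q(n,3)\le n-3$ and $r^L_q(n,3)\ge 3$. For the reverse inequality I would check that the whole interval lies below $N_q(3,1)=q^2+q+1$: indeed $\frac{q^3}{4q-2}<q^2+q+1$ for every $q\ge 2$, which reduces to the one-line comparison $0<3q^3+2q^2+2q-2$. Thus a defect-$1$ linear code of length $n$ exists, giving $k_q(n,3)\ge n-3$ and $r^L_q(n,3)\le 3$. Together these yield $r^L_q(n,3)=3$.

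Next I would bound the nonlinear redundancy. With $d=3$, Theorem~\ref{thm:3.2} reads $A_q(n,3)\ge \frac{q^n}{(2q-1)\ell}$, where $\ell$ is the least prime with $\ell\ge\max\{q,n\}$; since $n\ge q+2>q$ this is just the least prime $\ell\ge n$. Taking $\log_q$ gives $r_q(n,3)\le \log_q\bigl((2q-1)\ell\bigr)$, so it suffices to prove $(2q-1)\ell<q^3$.

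The heart of the argument is the estimate on $\ell$, and this is where I expect the only real work. By Bertrand's postulate there is a prime in the open interval $(n,2n)$ for $n\ge 2$ (and $n\ge q+2\ge 4$ here), so the least prime $\ell\ge n$ satisfies $\ell<2n$. Combining with the hypothesis $n\le \frac{q^3}{4q-2}=\frac{q^3}{2(2q-1)}$ yields $\ell<2n\le \frac{q^3}{2q-1}$, hence $(2q-1)\ell<q^3$ and therefore $r_q(n,3)\le\log_q\bigl((2q-1)\ell\bigr)<3$. Comparing with $r^L_q(n,3)=3$ completes the proof. The main obstacle is precisely this prime-gap control: the factor $2$ in $\ell<2n$ is exactly what dictates the $4q-2=2(2q-1)$ appearing in the right endpoint of the interval, and any sharper bound on the gap between consecutive primes would correspondingly widen the admissible range of $n$.
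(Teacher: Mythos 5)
Your proof is correct and follows essentially the same route as the paper: Bertrand's postulate to bound the least prime $\ell\ge n$ by $2n$, Theorem~\ref{thm:3.2} to force $(2q-1)\ell<q^3$, and $N_q(3,0)=q+1$ to bound the linear side. The only difference is cosmetic: you additionally prove $r^L_q(n,3)\le 3$ via $N_q(3,1)$ to pin the linear redundancy at exactly $3$, whereas the paper only needs the inequality $K_q(n,3)\le q^{n-3}$ and compares $A_q(n,3)>K_q(n,3)$ directly.
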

\begin{proof} First of all, by the fact that $N_q{(3,0)}=q+1$ given in Lemma \ref{lem:2.1}, we have $K_q(n,3)\le q^{n-3}$ for $n\ge q+2$.
In the interval $[n,2n)$, there must be a prime $\ell$. Thus, by Theorem \ref{thm:3.2}, we have
\begin{equation}~\label{eq:3.4.1}
A_q(n,3)\ge \frac{q^n}{(2q-1)\ell}> \frac{q^n}{(2q-1)\times 2n}\ge q^{n-3}\ge K_q(n,3).
\end{equation}
So, \[r_q(n,3)=n-\log_q A_q(n,3)< n-\log_q K_q(n,3)=r_q^L(n,3).\]
The proof is completed.
\end{proof}

\begin{cor}~\label{cor:3.5} Let $q$ be a prime power. Then for any $n$ with $q^2+q+1<n\le\frac{q^4}{4q-2}$, we have
\[r_q(n,3)<r_q^L(n,3).\]
\end{cor}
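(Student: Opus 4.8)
The plan is to mirror the proof of Corollary~\ref{cor:3.4} exactly, replacing the defect-$0$ bound on $K_q(n,3)$ by the defect-$1$ bound and pushing the length one power of $q$ higher. The key structural fact is that $N_q(3,1)=q^2+q+1$ from Lemma~\ref{lem:2.1}: this says that a $q$-ary $[n,k,3]$-linear code with defect $r=1$, i.e. with $k=n-3+1-1=n-3$, cannot have length exceeding $q^2+q+1$. First I would argue that for $n>q^2+q+1$ no linear $[n,n-3,3]$ code exists, so any $[n,k,3]$ linear code must have $k\le n-4$, giving the upper bound $K_q(n,3)\le q^{n-4}$ in this length range. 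This is the analogue of the step $K_q(n,3)\le q^{n-3}$ for $n\ge q+2$ used in the previous corollary.

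With that upper bound on $K_q(n,3)$ in hand, the rest is the same averaging/Bertrand-type estimate as before. I would invoke Bertrand's postulate to pick a prime $\ell$ with $n\le\ell<2n$ (so that $\ell\ge\max\{q,n\}$ holds, using $n>q^2+q+1\ge q$), and then apply Theorem~\ref{thm:3.2} with $d=3$, $d-2=1$, to get
\[
A_q(n,3)\ge\frac{q^n}{(2q-1)\ell}>\frac{q^n}{(2q-1)\cdot 2n}.
\]
The goal is then to show this lower bound strictly exceeds $K_q(n,3)\le q^{n-4}$, which reduces to the inequality $q^4>(2q-1)\cdot 2n=(4q-2)n$, i.e. $n<\frac{q^4}{4q-2}$. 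This is precisely the upper endpoint of the hypothesized interval, so under the assumption $q^2+q+1<n\le\frac{q^4}{4q-2}$ the chain $A_q(n,3)>q^{n-4}\ge K_q(n,3)$ closes, and taking $\log_q$ of both sides yields $r_q(n,3)=n-\log_q A_q(n,3)<n-\log_q K_q(n,3)=r_q^L(n,3)$.

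The only point requiring a little care — and what I would flag as the main obstacle — is justifying $K_q(n,3)\le q^{n-4}$ cleanly from $N_q(3,1)=q^2+q+1$. The subtlety is that $N_q(3,r)$ is defined as the largest length of codes of \emph{exactly} defect $r$, so one must argue monotonicity: if a $[n,k,3]$ code existed with $k\ge n-3$ (defect $\le 1$) at length $n>q^2+q+1$, one could derive a contradiction with the defect-$0$ and defect-$1$ length bounds from Lemma~\ref{lem:2.1}. Concretely, $k=n-2$ (defect $0$, an MDS code) is impossible for $n>q+1$, and $k=n-3$ (defect $1$) is impossible for $n>q^2+q+1$; since $q^2+q+1>q+1$, both are ruled out in our range, forcing $k\le n-4$. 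Once this monotonicity-in-defect observation is recorded, the remaining arithmetic is routine and identical in form to \eqref{eq:3.4.1}.
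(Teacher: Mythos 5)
Your proposal is correct and follows essentially the same route as the paper: bound $K_q(n,3)\le q^{n-4}$ via $N_q(3,1)=q^2+q+1$, take a prime $\ell\in[n,2n)$ by Bertrand's postulate, and apply Theorem~\ref{thm:3.2} to get $A_q(n,3)>\frac{q^n}{(4q-2)n}\ge q^{n-4}\ge K_q(n,3)$. Your extra remark justifying the monotonicity-in-defect step is a detail the paper leaves implicit, but it does not change the argument.
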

\begin{proof} By the fact that $N_q{(3,1)}=q^2+q+1$ given in Lemma \ref{lem:2.1}, we have $K_q(n,3)\le q^{n-4}$ for $n>q^2+q+1$.
In the interval $[n,2n)$, there must be a prime $\ell$. Thus, by Theorem \ref{thm:3.2}, we have
\begin{equation}~\label{eq:3.5.1}
A_q(n,3)\ge \frac{q^n}{(2q-1)\ell}> \frac{q^n}{(2q-1)\times 2n}\ge q^{n-4}\ge K_q(n,3).
\end{equation}
So, \[r_q(n,3)=n-\log_q A_q(n,3)< n-\log_q K_q(n,3)=r_q^L(n,3).\]
The proof is completed.
\end{proof}

\begin{ex}
{\rm Corollary~\ref{cor:3.4} and Corollary~\ref{cor:3.5} show that $r_q(n,3)<r_q^L(n, 3)$ if the code length $n$ belongs to some intervals.
The following table  lists these intervals for alphabet size $q=7,8$ and $9$.
 \begin{center}
 Table III\\{Nonlinear codes with lower redundancy when $d=3$} \medskip
     \begin{tabular}{|c|c|c|c|c|} \hline\hline
     $q$&Corollary~\ref{cor:3.4}&Corollary~\ref{cor:3.5}&Relation\\ \hline
       7&$9\le n\le 13$&$58\le n\le 92$&$r_7(n,3)<r_7^L(n,3)$\\ \hline
        8&$10\le n\le 17$&$74\le n\le 136$&$r_8(n,3)<r_8^L(n,3)$\\ \hline
       9&$11\le n\le 21$&$92\le n\le 197$&$r_9(n,3)<r_9^L(n,3)$\\ \hline\hline

     \end{tabular}
     \end{center}
\iffalse
$q=9$, $11\le n\le 21$ and $92\le n\le 197$, we have $r_9(n,3)<r_9^L(n,3)$.
$q=8$, $10\le n\le 17$ and $74\le n\le 136$, we have $r_8(n,3)<r_8^L(n,3)$.
$q=7$, $9\le n\le 13$ and $58\le n\le 92$, we have $r_7(n,3)<r_7^L(n,3)$.
$q=5$, $32\le n\le 34$, we have $r_5(n,3)<r_5^L(n,3)$.
\fi
}
\iffalse
{\rm
$q=9$, $92\le n\le 197$, we have $A_9(n,3)>K_9(n,3)$.
$q=8$, $74\le n\le 136$, we have $A_9(n,3)>K_9(n,3)$.
$q=7$, $58\le n\le 92$, we have $A_9(n,3)>K_9(n,3)$.
$q=5$, $32\le n\le 34$, we have $A_9(n,3)>K_9(n,3)$.
}
\fi
\end{ex}

\begin{rmk}{\rm Note that the  upper bounds on length $n$ in~Corollary~\ref{cor:3.4} and Corollary ~\ref{cor:3.5} are not tight. The redundancy of codes is  $\log_q(2q-1)\ell$. However, we replace this prime $\ell$ by $2n$. This causes a larger redundancy.  If we choose the smallest prime $\ell$ satisfying $\ell\ge n$, we usually get smaller redundancy. We use the following numerical examples to illustrate this fact.
\begin{itemize}
\item[(1)] Fix $q=9$ and $d=3$. By Corollary~\ref{cor:3.4} we  have $r_9(n,3)<r_9^L(n,3)$ only for the range $n\in[11, 21].$ Now let us take $\ell=41$, then for any length $n$ satisfying $11\le n\le \ell=41,$ we have $r_9(n,3)=n-a_9(n,3)\le\log_9{(18-1)\times 41}=2.98<3=r_9^L(n,3)$, i.e., $r_9(n,3)<r_9^L(n,3)$ for any $n\in[11, 41].$
\item[(2)] Similarly, let $q=9$ and $d=3$.  By Corollary~\ref{cor:3.5} we only have $r_9(n,3)<r_9^L(n,3)$ only for the range $n\in[92, 92].$ Now let us take $\ell=383$, then for any length $n$ satisfying $92\le n\le \ell=383,$ we have $r_9(n,3)=n-a_9(n,3)\le\log_9{(18-1)\times 383}=3.99<4=r_9^L(n,3)$, i.e., $r_9(n,3)<r_9^L(n,3)$ for any $n\in[92, 383].$
\end{itemize}

}\end{rmk}

The above corollaries focus on nonlinear codes with minimum distance $d=3.$ The following result shows that for larger  minimum distance, we can also find a range of code length in which our codes perform better than linear codes.

\begin{cor}~\label{cor:3.7} Let $q$ be a prime power and let $n$ be a prime. If the Main MDS conjecture holds for $q$-ary MDS codes, then for any integer $n$ satisfying $q+2<n\le\left(\frac{q^d}{(d-1)(q-1)+1}\right)^{1/(d-2)}$ and any integer $3\le d\le n-2$ , we have
\[r_q(n,d)<r^L_q(n,d).\]
\end{cor}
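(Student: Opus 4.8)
The plan is to reduce the conclusion $r_q(n,d)<r_q^L(n,d)$ to the equivalent size inequality $A_q(n,d)>K_q(n,d)$, since $r_q(n,d)=n-\log_q A_q(n,d)$ and $r_q^L(n,d)=n-\log_q K_q(n,d)$. The two ingredients are an upper bound on $K_q(n,d)$ from the Main MDS conjecture and the lower bound on $A_q(n,d)$ from Theorem \ref{thm:3.2}.

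First I would bound $K_q(n,d)$ from above. If a $q$-ary $[n,k,d]$-linear code is MDS, then $k=n-d+1$; under the assumption $3\le d\le n-2$ we get $3\le k\le n-2$, so such a code is nontrivial. By the Main MDS conjecture its length would then be at most $q+2$. Since we assume $n>q+2$, no MDS code of length $n$ exists, hence every $[n,k,d]$-linear code has defect at least $1$, i.e. $k\le n-d$, and therefore $K_q(n,d)\le q^{n-d}$.

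Next I would apply Theorem \ref{thm:3.2}. Because $n$ is prime and $n>q+2>q$, we have $\max\{q,n\}=n$, so the smallest prime $\ell\ge\max\{q,n\}$ is exactly $\ell=n$. This gives $A_q(n,d)\ge q^n/\big(((d-1)(q-1)+1)\,n^{d-2}\big)$. The hypothesis $n\le\left(q^d/((d-1)(q-1)+1)\right)^{1/(d-2)}$ is exactly the statement $((d-1)(q-1)+1)\,n^{d-2}\le q^d$, which already yields $A_q(n,d)\ge q^{n-d}\ge K_q(n,d)$.

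The step that needs care — and the main obstacle — is upgrading this to a \emph{strict} inequality, since at the upper endpoint of the range the chain above only gives $A_q(n,d)\ge K_q(n,d)$. I would rule out the boundary equality $((d-1)(q-1)+1)\,n^{d-2}=q^d$ by a divisibility argument: as $q$ is a prime power, every prime dividing $q^d$ divides $q$ and is hence at most $q<n$, so $n\nmid q^d$; on the other hand, for $d\ge 3$ the prime $n$ divides the left-hand side $((d-1)(q-1)+1)\,n^{d-2}$. Thus equality is impossible, so $((d-1)(q-1)+1)\,n^{d-2}<q^d$ strictly, whence $A_q(n,d)>q^{n-d}\ge K_q(n,d)$ and finally $r_q(n,d)<r_q^L(n,d)$.
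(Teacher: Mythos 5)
Your proposal is correct and follows essentially the same route as the paper: bound $K_q(n,d)\le q^{n-d}$ via the Main MDS conjecture, take $\ell=n$ in Theorem~\ref{thm:3.2}, and compare sizes. The one place you go beyond the paper is the strictness of the key inequality: the paper simply asserts $\frac{q^n}{((d-1)(q-1)+1)n^{d-2}}>q^{n-d}$, which at the upper endpoint $n=\bigl(q^d/((d-1)(q-1)+1)\bigr)^{1/(d-2)}$ of the stated range would only be $\ge$; your divisibility observation --- the prime $n>q$ divides $((d-1)(q-1)+1)n^{d-2}$ since $d\ge 3$ but cannot divide $q^d$ because $q$ is a power of a single prime $p\le q<n$ --- is exactly what is needed to rule out equality, so it quietly repairs a small gap in the paper's own argument.
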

\begin{proof} First of all, by the Main MDS conjecture, we have $K_q(n,d)\le q^{n-d}$.
Take $\ell=n$. Then, by Theorem \ref{thm:3.2}, we have
\begin{equation}\label{eq:z}A_q(n,d)\ge \frac{q^n}{((d-1)(q-1)+1)n^{d-2}}> q^{n-d}\ge K_q(n,d).\end{equation}
Hence, $r_q(n,d)= n-a_q(n,d)<d\le n- k_q(n,d)= r^L(n,d).$
The proof is completed.
\end{proof}

\begin{rmk}
{\rm In  Corollary~\ref{cor:3.7}, we assume that the code length $n$ is a prime. In fact, we do not have to make such an assumption. For any integer $n\ge 2$, we can simply replace $n$ by $2n$ in the denominator of \eqref{eq:z} to get a tighter upper bound on length $r_q(n,d)$.
}
\end{rmk}

\begin{ex}
{\rm Some examples are listed in the following from Corollary~\ref{cor:3.7}.
\begin{itemize}
\item[(1)] Fix $q=13$, $d=5$, then for $n\in\{17,19\}$,  we have $r_{13}(n,5)=4.96<5=r_{13}^L(n,5)$.
\item[(2)] Fix $q=17$, $d=5$, then $n\in\{19,23\}$,  we have $r_{17}(n,5)=4.79<5=r_{17}^L(n,5)$.
\item[(3)] Fix $q=32$, $d=7$, then $n\in\{37,41,43\}$,  we have $r_{32}(n,7)=6.72<7=r_{32}^L(n,7)$.
\end{itemize}
}
\end{ex}

\section{Decoding Algorithms}
In this section, we discuss decoding algorithm of the codes given in Section 3. Let us consider erasure error decoding algorithm and adversary error decoding algorithm separately. For erasure errors, the decoding algorithm mainly consists of solving an equation system. On the other hand, for adversary errors with larger error-correcting capability, the decoding algorithm is based on the decoding algorithm of  generalized Reed-Solomon codes.

Firstly, we introduce the erasure error decoding algorithm.
The code $\mC_d(\bb)$ has distance $d$. Hence, it can correct $d-1$ erasure errors. From our proof of Lemma \ref{lem:3.1}, we find that to decode $d-1$ erasure errors, it is sufficient to solve an equation system with $d-1$ variables over $\ZZ_\ell$. Thus the decoding complexity is $O(d^3\log^2\ell)=O(d^3\log^2n)$. In case $d$ is a constant, the decoding algorithm requires $O(\log^2n)$ bit operations. If $d$ is not a constant, we can use the following decoding algorithm for adversary errors to get a quasi-linear time $O(n\log^4n)$ for erasure errors.

To correct adversary errors, let $\tau=\left\lfloor\frac{d-1}2\right\rfloor$. We distinguish two cases: $\tau=1$ and  $\tau>1$. Let $\by=(y_1,y_2,\dots,y_n)\in[0,q-1]^n$ be a received word.

\begin{itemize}
\item[$\blacksquare$~\textbf{Case $1$}] {\textbf {$\tau=1$}}.
\begin{itemize}
\item[Step 1:] Compute $ t_0(\by)\pmod{(d-1)(q-1)+1} $
and  $t_j(\by)\pmod{\ell}$  for $1\le j\le d-2$. If  \[ t_0(\by)\equiv b_0 \pmod{(d-1)(q-1)+1},\quad t_j(\by)\equiv b_j\pmod{\ell}\]  for $1\le j\le d-2,$ output $\by$. Otherwise, go to Step 2.
\item[Step 2:] For every position $k\in[n]$, replace $y_k$ by an element $\Ga\in[0,q-1]\setminus\{y_k\}$ and compute  $ t_0(\bz)=t_0(\by)+(\Ga-y_k)\pmod{(d-1)(q-1)+1} $
and  $t_j(\bz)=k^j(\Ga-y_k)\pmod{\ell}$  for $1\le j\le d-2$, where $\bz$ is obtained by $\by$ by replaying $y_k$ with $\Ga$. If
\begin{equation}\label{eq:7}
t_0(\bz)\equiv b_0 \pmod{(d-1)(q-1)+1},\quad t_j(\bz)\equiv b_j\pmod{\ell}
  \end{equation}
  for $1\le j\le d-2$, output $\bz$. Otherwise, we try other elements of $[0,q-1]$ and position $k\in[n]$ until we find $k$ and $\Ga$ that satisfy \eqref{eq:7}.
\end{itemize}

Note that the complexity of the above algorithm in Case $1$ is $O(n\log^2 n)$ bit operations.

\item[$\blacksquare$~\textbf{Case $2$}] {\textbf {$\tau>1$.}}

Let $\mA$ be the dual code of the generalized Reed-Solomon code over $\ZZ_\ell$ with evaluation points $1,2,\dots, n$ and dimension $d-1$. Then $\mA$  is also a generalized Reed-Solomon code with parameters $[n,n-d+1,d]$. Assume that $\bGa$ is the codeword that was transmitted for the code $\mC_d(\bb)$. Then $\by-\bGa$ is the error vector when we transmit the codeword $\bGa$.
% Then $\by-\bGa$ is the error vector when we transmit the all-zero codeword ${\bf 0}$ for the Reed-Solomon code $\mA$.
\begin{itemize}
\item[Step 1:]  Compute $ t_0(\by)\pmod{(d-1)(q-1)+1} $
and  $t_j(\by)\pmod{\ell}$  for $1\le j\le d-2$. View $ t_0(\by)\pmod{(d-1)(q-1)+1} $ as an element of $\ZZ_\ell$. Compute the vector $\bs:=\bb-(t_0(\by)\pmod{(d-1)(q-1)+1},t_1(\by)\pmod{\ell},\cdots, t_{d-2}(\by)\pmod{\ell})\in\ZZ_\ell^{d-1}$. Then $\bs$ is the syndrome of the received word $\by-\bGa$ for the  generalized Reed-Solomon code $\mA$.
\item[Step 2:] Find the error vector $\by-\bGa$ through the syndrome $\bs$ via a decoding of the  generalized Reed-Solomon code $\mA$.
\end{itemize}
The decoding complexity depends on the decoding algorithm of  generalized Reed-Solomon codes. The fastest decoding algorithm of Reed-Solomon codes has complexity $O(n\log (d-1)+(d-1)\log^2(d-1))$ bit operations \cite{{TH2002}}. Thus, it takes  $O((n\log (d-1)+(d-1)\log^2(d-1))\log^2n)=O(n\log^4n)$ bit operations. If $d$ is a constant, the the decoding complexity is in fact $O(n\log^2n)$ bit operations.
\end{itemize}

\end{document}